\documentclass[11pt]{article}

\usepackage{amssymb,amsmath,amsthm,amsfonts,sectsty,url,graphicx,fullpage}
\usepackage[bold,full]{complexity}

\newsavebox{\theorembox}
\newsavebox{\lemmabox}
\newsavebox{\corollarybox}
\newsavebox{\propositionbox}
\newsavebox{\examplebox}
\newsavebox{\conjecturebox}
\newsavebox{\algbox}
\newsavebox{\qbox}
\newsavebox{\problembox}
\newsavebox{\definitionbox}
\newsavebox{\assumptionbox}
\newsavebox{\hypothesisbox}
\savebox{\theorembox}{\noindent\bf Theorem}
\savebox{\lemmabox}{\noindent\bf Lemma}
\savebox{\corollarybox}{\noindent\bf Corollary}
\savebox{\propositionbox}{\noindent\bf Proposition}
\savebox{\examplebox}{\noindent\bf Example}
\savebox{\conjecturebox}{\noindent\bf Conjecture}
\savebox{\algbox}{\noindent\bf Algorithm}
\savebox{\qbox}{\noindent\bf Question}
\savebox{\definitionbox}{\noindent\bf Definition}
\savebox{\problembox}{\noindent\bf Problem}
\savebox{\assumptionbox}{\noindent\bf Assumption}
\savebox{\hypothesisbox}{\noindent\bf Hypothesis}
\newtheorem{theorem}{\usebox{\theorembox}}

\newtheorem{definition}{\usebox{\definitionbox}}
\renewcommand{\Pr}{ \mathrm P}
\newcommand{\IN}{\mbox{$I\!\!N$}}

\newcommand{\citeyear}{\cite}


\begin{document}

\title{On the Non-Existence of Nash Equilibrium in Games with Resource-Bounded Players}

\author{Joseph Y. Halpern
\and Rafael Pass
\and Daniel Reichman}

\maketitle

\begin{abstract}

We consider sequences of games $\mathcal{G}=\{G_1,G_2,\ldots\}$ where,
for all $n$, $G_n$ has the same set of players. Such sequences arise
in the analysis of running time of players in games, in electronic
money systems such as Bitcoin and in cryptographic protocols. Assuming
that one-way functions exist, we prove that there is a sequence of
2-player zero-sum Bayesian games $\mathcal{G}$ such that, for all $n$,
the size of every action in $G_n$ is polynomial in $n$, the utility
function is polynomial computable in $n$, and yet there is no
polynomial-time Nash equilibrium, where we use a notion of Nash
equilibrium that is tailored to sequences of games. We also demonstrate that Nash equilibrium may not exist when
considering players that are constrained to perform at most $T$ computational steps in each of the games $\{G_i\}_{i=1}^{\infty}$. These examples may shed
light on competitive settings where the availability of more running
time or faster algorithms lead to a ``computational arms race",
precluding the existence of equilibrium. They also point to
inherent limitations of concepts such ``best response" and Nash
equilibrium in games with resource-bounded players.

\end{abstract}
\newpage
\section{Introduction}
Computation dealing with large amounts of data is fundamental to many
fields such as data-mining, artificial intelligence and algorithmic
trading. In the ``big-data" era, the need for faster algorithms and
more powerful computing machines is ever-growing, and is amplified by
competition between firms. Competitive settings, where competing
parties keep searching for more efficient algorithms and building more
powerful hardware, lead to a natural question: what is the likely
outcome of such competition in the long run and, in particular, is
some kind of equilibrium likely to be reached?

We model such settings using tools from game theory and complexity
theory. Formally, consider \emph{computational games} \cite{Kuhn}
$\{G_1,G_2,\ldots\}$, where for all $n$, $G_n$ is a finite game. We
assume that each player chooses a Turing machine (TM) that, given $n$,
computes a strategy for the player in $G_n$. Computational games arise
in several settings. One example is ``crypto-currencies" such as
Bitcoin. An essential ingredient of Bitcoin (e.g., \cite{Bitcoin}) is
miners who solve challenging cryptographic problems, whose solution is
later used in verifying transactions in the system. Miners get
electronic cash rewards for solving such puzzles. Bitcoin keeps the
average time at which puzzles are solved a constant by making the
cryptographic problem needed to be solved harder and harder, forcing
miners to examine a larger number of possible solutions. In this way,
it deals with technological advances that allow faster solution to
cryptographic puzzles. Such a scenario can be modeled as a sequence of
games, where in the $n$th game the miner is required to solve a
cryptographic puzzle $P_n$ where the number of candidate solutions
that need to be examined in order to solve $P_n$ is a function of
$n$.

Cryptographic protocols such as \emph{commitment schemes}
\cite{Commitment} provide another example of computational games. A
commitment scheme consists of two parties; a sender and a receiver. In
the first step of this protocol, the sender chooses a bit $b$ and
sends an encryption of $b$ to the receiver, committing the sender to
$b$ without revealing $b$ to the receiver.
Next, the receiver chooses a bit; Finally in the third step, the
sender reveals the bit to the receiver. This protocol can be viewed as
a game where the receiver wins if the
bit he chooses matches the bit revealed; the sender wins if they do
not match. Clearly, if the receiver can break the scheme and
deduce the sender's bit, the receiver wins; if the sender can cheat
(``reveal" a bit that does not necessarily match what he committed
to), the sender wins. The encryption at the first step involves a security
parameter $k$, where larger security parameters provide more security
(i.e.., more running time is required to break the scheme). This can be
modeled as a sequence of games, where in the $k$th game the sender
encrypts the bit using a security parameter $k$
(see also \cite{Kuhn}).
Many cryptographic protocols, including secret sharing and multiparty
computation, can be viewed as games  in this way.%
 \footnote{The games used to model such protocols are
  extensive-form games. Our results demonstrate non-existence of
  equilibrium in Bayesian games, which are a special case of
  extensive-form games. Hence, our non-existence results carry over to
  extensive-form games.}

We focus on \emph{resource-bounded} players, continuing a long line of
work in game theory (e.g., \cite{Megiddo,Neyman,Neyman2,Yan}). We
begin by studying players that are polynomial-time bounded. For a
sequence $\mathcal{G}=\{G_1,G_2,\ldots\}$, this requirement entails
that for all $n$, both the actions performed by players in $G_n$ and the
computation of utilities can be done in time polynomial in $n$. We
also assume that the length of every action is polynomial in $n$. We
call sequences of games with these properties \emph{polynomial games}.
These games seem to capture intended applications such as Bitcoin
and cryptographic protocols.

One of the most widely used solution concepts in game theory is Nash
equilibrium (NE). In a Nash equilibrium, no player can improve his
utility by deviating unilaterally from his strategy. NE can be
extended to polynomial games $\{G_1,G_2,\ldots\}$ \cite{Kuhn}. We
assume that every game $G_j$ is a $k$-player game and that for $1\le i
\le k$, player $i$ uses a TM $M_i$ that computes his actions in
$G_j$ given $j$. Roughly speaking, a machine profile $(M_1,\ldots,M_k)$
consisting of polynomial time TMs is a polynomial-time NE
if, for every player, replacing his TM by a different
(polynomial-time) TM gives him at most a negligible improvement to his
utility. There are certain subtleties in this definition; see
Definition \ref{def:machine} and the discussion thereafter for more
detail.  We show that, assuming the existence of one-way functions, there are
polynomial 2-player zero-sum games for which no polynomial-time Nash
equilibrium exists. The idea is to simulate the ``largest integer
game" in this setting, the game where players simultaneously output an
integer, and the player who chooses the largest integer wins. Clearly
this game has no Nash equilibrium \cite{Solan}. We can effectively
simulate this game by presenting players with multiple one-way
function puzzles, requiring players to invert as many puzzles as
possible. We can ensure that a player with sufficiently (polynomially)
more running time can invert more puzzles. Thus, we get an ``arms
race" with no equilibrium. This example points to an inherent
difficulty in analyzing games with polynomially bounded
players. Namely, in such games there is often no best response;
players can use longer and longer running times to improve their
payoffs.
Interestingly, a similar phenomenon has been observed in Bitcoin, where miners use increasingly more sophisticated computational devices for the mining operation (see \cite{Bitcoin} and the reference within).

Finally, we demonstrate that Nash equilibrium may fail to exist even
if players are constrained to run for at most $T$ steps for a fixed
integer $T$, without asymptotics kicking in. The idea is to let
players first play a game (matching pennies) that requires
randomization to achieve equilibrium, and then effectively give the
player with greater remaining running time an additional
bonus. Assuming that the generation of a random bit requires computational
effort, this game cannot have a Nash equilibrium.

We may hope that, in our examples, even if there does not exist a NE,
there might exist an $\epsilon$-NE for some small $\epsilon$. However
all our arguments for the non-existence of NE also show the
non-existence of $\epsilon$-NE for some appropriate $\epsilon>0$.

Polynomial games bear some similarities to \emph{succinct games}. In
succinct games,
there exists a circuit $C$ that calculates the utility $C(x_1,x_2,\ldots,x_k)$ of the
players once they choose the actions $x_1,x_2,\ldots,x_k \in
\{0,1\}^m$. It is known that, given a 2-player zero-sum succinct game,
it is EXP-hard to find a NE \cite{Koller,Fortnow} (see also
\cite{Vadhan}). Our results regarding the non-existence of NE in
polynomial games are incomparable to these results. We are concerned
with \emph{polynomial-time computable} strategies. Considering
polynomially bounded players (as opposed to unbounded players) may
drastically change the set of Nash equilibria in succinct
games. Indeed, a NE for a sequence of games $\{G_1,G_2,\ldots\}$ with
polynomially bounded players may fail to be a Nash equilibrium for
$G_n$ for all $n\geq 1$: for an example, see the end of
Section~\ref{sec:oneway}. For similar reasons, the PPAD-hardness
results of finding a Nash equilibrium in a fixed game \cite{Chen,Pap}
cannot be applied in our setting either.

\section{Preliminaries}
We begin by defining Bayesian games.

\begin{definition}
A $k$\emph{-player normal-form Bayesian game} is described by a tuple
$(J,B,T,P,v)$, where
\begin{itemize}
\item $J$ is a set of $k$ players (we identify $J$ with $[k]=\{1,\ldots,k\}$);
\item $B=\prod_{i=1}^kB_i$, where $B_i$ is a finite set for all $i \in
  [k]$ consisting of the available actions of player $i$;
\item $T=\prod_{i=1}^kT_i$, where $T_i$ is a finite set called the \emph{type space} of player $i$;
\item $P$ is a probability distribution over $T$;
\item $v=(v_1,\ldots,v_k)$, where for all $i$, $v_i$ is a function from
  $B \times T$ to the real numbers.
\end{itemize}
\end{definition}

In our settings, it will often be the case that all types are
perfectly correlated: all players have the same type and all players
know the type of every other player. Observe that normal-form games
can be viewed as a special case of Bayesian games (where the type
space is a singleton).  Finally, since we are concerned here mainly
with Bayesian
games, when we write ``game" we mean ``Bayesian game", unless
explicitly stated otherwise.

A \emph{pure strategy} $s_i$ for player $i$ is a map $s_i:T_i\rightarrow B_i$; a strategy $s_i$ maps the type $t_i \in T_i$ of player $i$ to an action $s_i(t_i) \in B_i$.
We denote by $\Delta(B_i)$ the set of all probability distribution
over $B_i$; let $\Delta=\Pi_{i=1}^k\Delta(B_i).$ A mixed-strategy
$s_i$ for player $i$ is a function mapping type $t_i \in T_i$ to an
element of $\Delta (B_i)$. We denote by $s_i(t_i,b_i)$ the probability
assigned by a mixed strategy $s_i(t_i)$ to $b_i \in
B_i$. The expected utility of player $i$ with the mixed strategy
profile $s=(s_1,\ldots,s_k)$ (where $t =(t_1,\ldots,t_k)\in T$,
$b=(b_1,\ldots,b_k) \in B$, and $(s_1(t_1),\ldots,s_k(t_k)) \in \Delta)$ is
given by
\begin{equation}\label{eq1}
V_i(s)=\sum_{t \in T}P(t)\sum_{b \in
  B}\left(\prod_{i=1}^ks_i(t_i,b_i)\right) v_i(t,b).
\end{equation}
Note that there are two sources of uncertainty in the utility of a
player choosing a mixed action: the probability distribution over
other players actions and the distribution $P$ over the type space.

\begin{definition}\label{definition:Nash}
Let $G=(J,B,T,P,v)$ be a $k$-player Bayesian game and suppose that
$\epsilon \geq 0$. A mixed-strategy profile $s=(s_1,\ldots,s_k)$ is an
$\epsilon$-\emph{Nash equilibrium} ($\epsilon$-NE for short) if,
for all players $i$ and all mixed strategies $s'_i$, we have that
$$V_i(s)\geq V_i(s'_i,s_{-i})-\epsilon.$$
(As usual, if $s=(s_1,\ldots,s_k)$ then
$s_{-i}=(s_1,\ldots,s_{i-1},s_{i+1},\ldots,s_k)$ is the tuple excluding.
$s_i$.) When $\epsilon=0$, we have a Nash equilibrium.
\end{definition}

To reason about resource-bounded players in games, we consider a
sequence $\{G_1,G_2,\ldots\}$ of games where, for all $n$,
$G_n=(J,B^n,T^n,P^n,v^n)$ is a $k$-player game ($k$ is fixed and does
not depend on $n$).
We adapt the definition of \cite{Kuhn}, which in turn is based on
earlier definitions by Dodis, Halevy and Rabin \citeyear{Dodis} and Megiddo and
Wigderson \citeyear{Megiddo}, and is applied to extensive-form games,
to Bayesian games. For an integer $s$,
recall that
$\{0,1\}^{\leq s}$ is the set of all bit strings of length at most $s$.

\begin{definition}\label{defintion:sequence}
A computational game $\mathcal{G}=\{G_1,G_2,\ldots\}$ is a sequence of
normal-form Bayesian games, where $G_n = ([k],B^n,T^n,P^n,v^n)$
, such that
\begin{itemize}
\item The set of players in $G_n$, $[k],$ is the same for all $n$.
\item For all $n$ and all $i$, $B_i^n\subseteq\{0,1\}^{\leq m}$ for some finite
 $m$ (that may depend on $n$).
\item For all $n$ and all $i$, $T_i^n \subseteq\{0,1\}^{\leq r}$ for some finite
 $r$ (that may depend on $n$).
\item For all $i \in [k]$ and $n$, there is a TM $M$ such that, given
  $b \in B^n$, $t \in T^n$, and $1^n$, computes $v^n_i(b,t)$.
\end{itemize}
$\mathcal{G}$ is \emph{bounded} if there exist constants $0<c<C$ such
  that for all $n,b \in B^n$, and $t \in T^n$ we have that $v^n_i(b,t)
  \neq 0\Rightarrow |v^n_i(b,t)| \in [c,C].$
\end{definition}

When dealing with games with polynomial-time players, we require
slightly stronger properties summarized in the definition
below. Following the definition of polynomial games for
extensive- form games \cite{Kuhn}, we define polynomial games for a sequence of Bayesian games.
\begin{definition}\label{defintion:polynomial}
A computational game $\mathcal{G}=\{G_1,G_2,\ldots\}$ is a \emph{polynomial game} if the following conditions hold:
\begin{itemize}
\item There exist a polynomial $p$ such that, for all $n$ and all $i$, $B_i^n=\{0,1\}^{\leq p(n)}$.
\item There exist a polynomial $q$ such that, for all $n$ and all $i$, $T_i^n=\{0,1\}^{\leq q(n)}$.
\item For all $i \in [k]$ and $n$, there is a TM $M$ such that, given $b=(b_1,\ldots,b_k) \in B^n$, $t \in T^n,$ and $1^n$, computes $v^n_i(b,t)$ and runs in time polynomial in $n$.
\end{itemize}
\end{definition}

A \emph{strategy} for player $j$ in a computational game $\mathcal{G}$ is a
TM $M_j$ that, given $1^n$ and the type $t_j \in T_j^n$, outputs a randomized
strategy $M_j(1^n,t_j)$ for player $j$ in game $G_n$, that is,
probability distribution over $B_j^n$.  $M_j(1^n)$ is the
strategy defined by taking $M_j(1^n)(t_j)=M_j(1^n,t_j)$.
The utility of player $i$ in $G_n$ given a machine profile $(M_1 \ldots M_k)$ is
$V_i^n(M_1(1^n),\ldots,M_k(1^n))$ (as defined in (\ref{eq1})).

To analyze computational games $\mathcal{G}=\{G_1,G_2,\ldots\}$, we
would like to be able to apply classical game-theoretic notions, such
as best response and Nash equilibrium, to sequences of games. However,
there are certain difficulties in generalizing these notions to
computational games. A first obstacle is that sequences of infinite
games may allow resource-bounded players to improve over any strategy
by doing additional polynomial-time computations. For example,
consider a player who gets a payoff of 1 by breaking an encrypted
massage $E(s)$ with $s \in \{0,1\}^n$ and a payoff of $0$ if he does
not break it, where the player's running time is polynomial in
$n$. Assuming that there is no polynomial-time algorithm (in $n$) for
finding $s$ given $E(s)$, there is no best response in this game, as a
player can always make polynomially many additional ``guesses" on top
of his current action, increasing his expected utility.
As pointed out by Dodis, Halevi, and Rabin \citeyear{Dodis}, this
observation applies to many problems of interest, such as those
arising from cryptographic protocols \cite{Dodis}.

One way around this problem, suggested in \cite{Dodis,Kuhn}, is to
ignore \emph{negligible} additive changes in
the utility of players, where a sequence $\delta(n)$ is negligible if
for every polynomial $p,p(n)=o(\delta(n)^{-1})$. That is, deviations
that result in a negligible increase in utility are not
considered to be improvements.
Ignoring negligible terms suffices to ensure the existence of
equilibrium in a number of games of interest for which there would not
be an equilibrium otherwise \cite{Dodis}.

If we ignore negligible change, then given a machine profile
$\overline{M}$, changing the behavior of a  TM $M$ in finitely many
games will not be a
deviation breaking an alleged equilibrium,  as altering a sequence
$\delta(n)$ on finitely many $n$'s does not change the fact that
$\delta(n)$ is negligible. On the other hand, a deviation that
improves a given player utility on infinitely many $n$'s by
improves a given player's utility on infinitely many $n$'s by
a constant $\delta>0$ implies that the machine profile is not a
NE. Finally, it is worth noting that if the utilities of players are
exponentially small (say, on the order of $1/2^n$ in the game $G_n$),
a negligible
additive term can have a noticeable effect on the utility of
players; on the other hand, if utilities are exponentially large, even
a (non-negligible) constant change in utilities would be viewed as negligible.
In order to avoid such scaling issues, we deal exclusively
with bounded games when considering solution concepts for
computational games.

\begin{definition}\label{def:machine}
Let $\mathcal{M}$ be a set of TMs and let $\epsilon\geq 0$ be a constant independent of $n$.
A profile $\overline{M}=(M_1,\ldots,M_k)$ of TMs is an $\epsilon$-$\mathcal{M}$-NE for a bounded computational game $\mathcal{G}$ with respect to $\mathcal{M}$, if (a) for all $i$,
$M_i \in \mathcal{M}$, and (b) there exists a negligible sequence $\delta(n)$ such that, for all $M_i' \in \mathcal{M}$ and all $n>0$ and all $i \in [k]$ we have that
\begin{equation}
\label{eq:Nash}
V_i(M_i(1^n),M_{-i}(1^n))\geq V_i({M'_i}(1^n),M_{-i}(1^n))-\epsilon-\delta(n).
\end{equation}
When $\epsilon=0$, we say that $\overline{M}$ is a $\mathcal{M}$-Nash equilibrium. If $\mathcal{M}$ is the set of all probabilistic polynomial-time TMs, we say $\overline{M}$ is a \emph{polynomial} $\epsilon$-NE.
\end{definition}

We can discuss polynomial-time players, best response, and
equilibrium even if the action space of every player is of
super-polynomial size. However, in this case, there are trivial
examples showing that a NE may not exist. For example, one can take
$G_n$ to be the 2-player zero-sum game where each player outputs an
integer of length at most $2^{2^n}$ (written in binary) and the player
outputting the larger integer receives payoff 1, with both players
getting $0$ in case of equality. Clearly this sequence of games does
not have a polynomial equilibrium.

\section{Games With No Polynomial Equilibrium}\label{sec:oneway}

As we now show, there is a polynomial game for which there is no polynomial NE, assuming one-way functions exist.
We find it convenient to use the definition of one-way function given in \cite{Holenstein}.

\begin{definition}
\label{definition_oway}
Given $s: \IN \rightarrow \IN$, $t: \IN \rightarrow \IN$,
a \emph{one-way function with security parameter $s$ against a
$t$-bounded inverter} is a family of functions
$f_k:\{0,1\}^k\rightarrow \{0,1\}^m$, $k = 1, 2, 3, \ldots$,
satisfying the following
properties:
\begin{itemize}
\item
$m=k^b$ for some positive constant $b$;
\item there is a TM $M$ such that, given $x$ with $|x|=k$
  computes $f_k(x)$  in time polynomial in $k$;
\item for all but finitely many $k$'s and all probabilistic TM $M'$,
running in time at most $t(k)$ for a given input $f_k(x)$,
$$\Pr[f_k(M'(f_k(x)))=f_k(x)]<\frac{1}{s(k)},$$
where the probability $\Pr$ is taken over $x$ sampled uniformly from
$\{0,1\}^k$ and the randomness of $M'$.
\end{itemize}
\end{definition}
We assume that exponential one-way functions exist. Specifically, we
assume that there exists a one-way function that is $2^{k/10}$-secure
against a $2^{k/30}$-bounded inverter.
The existence of a one-way function with these parameters follows from
an assumption made by Wee \citeyear{Wee} regarding the existence of
exponential non-uniform one-way functions.
Given $f_k(x)$, we say an algorithm \emph{inverts} $f_k(x)$ if it
finds some $z$ such that $f_k(x)=f_k(z)$.

We can now demonstrate the non-existence of polynomial-time computable
equilibrium in a polynomial game.
\begin{theorem}\label{thm:one_way}
If there exists a one-way function that is $2^{k/10}$-secure against a
$2^{k/30}$-inverter, then, for all $\epsilon > 0$, there exists
a 2-player zero-sum polynomial  game $\mathcal{G}$ that
has no polynomial $\epsilon$-NE.
\end{theorem}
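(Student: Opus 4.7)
The plan is to realize the ``largest integer wins'' game described in the introduction by stacking one-way function puzzles of geometrically increasing security levels, so that a polynomial-time player running in time $n^d$ is effectively capped at level $d$, yet a deviator running in time $n^{O(d)}$ can always brute-force level $d+1$ and strictly win.

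Fix $\epsilon>0$ and set $A = 2\epsilon$. For each $n$, I would define $L=n$ levels with security parameters $k_\ell = 30 \ell \lceil \log n\rceil$, and use $f_{k_\ell}$ from the assumed one-way family. The common type is $t=(y_1,\dots,y_L)$, where $y_\ell = f_{k_\ell}(x_\ell)$ for independent uniform $x_\ell\in\{0,1\}^{k_\ell}$; player $i$'s action is a tuple $(z_1^i,\dots,z_L^i)$; and setting $\ell_i^* = \max\{\ell\in[L] : f_{k_\ell}(z_\ell^i)=y_\ell\}\cup\{0\}$, the utilities are $u_1 = A\cdot \mathrm{sign}(\ell_1^*-\ell_2^*)$ and $u_2=-u_1$. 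Since $\sum_\ell k_\ell = O(n^2\log n)$ and each $f_{k_\ell}$ is polynomial-time computable, this is a zero-sum, bounded, polynomial Bayesian game in the sense of Definitions~\ref{defintion:sequence}--\ref{defintion:polynomial}.

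Now suppose for contradiction that $\overline{M}=(M_1,M_2)$ is a polynomial $\epsilon$-NE witnessed by some negligible $\delta(n)$. By zero-sum, $V_1(\overline{M})(n)+V_2(\overline{M})(n)=0$, so there is an infinite set $\mathcal{N}$ of $n$ on which, WLOG, $V_1(\overline{M})(n)\le 0$. Let $d$ be such that $M_2$ runs in time at most $n^d$. For each $\ell>d$, I would define an inverter $M^{(\ell)}$ that, on input a string $y$, samples $x_{\ell'}$ uniformly and sets $y_{\ell'}=f_{k_{\ell'}}(x_{\ell'})$ for $\ell'\ne\ell$, runs $M_2(1^n,(y_1,\dots,y_{\ell-1},y,y_{\ell+1},\dots,y_L))$, and returns its $\ell$-th coordinate. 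For large $n$ the runtime of $M^{(\ell)}$ is at most $n^\ell = 2^{k_\ell/30}$, so the one-way assumption yields
\[
\Pr\bigl[f_{k_\ell}(M^{(\ell)}(f_{k_\ell}(x)))=f_{k_\ell}(x)\bigr] < 1/2^{k_\ell/10}=1/n^{3\ell}.
\]
Summing over $\ell>d$ gives $\Pr[\ell_2^*>d]\le 2/n^{3(d+1)}$. Next I exhibit the deviation $M_1'$ which simply enumerates all $2^{k_{d+1}}=n^{30(d+1)}$ candidate preimages of $y_{d+1}$, finds a valid $z^*$ with $f_{k_{d+1}}(z^*)=y_{d+1}$, and outputs the action with $z^*$ in coordinate $d+1$ and arbitrary junk elsewhere. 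Then $M_1'$ runs in polynomial time and deterministically achieves $\ell_1^*\ge d+1$, whence
\[
V_1(M_1',M_2)(n) \ge A\cdot\Pr[\ell_2^*\le d] - A\cdot\Pr[\ell_2^*>d] \ge A - 4A/n^{3(d+1)}.
\]
For $n\in\mathcal{N}$ sufficiently large, $V_1(M_1',M_2)(n) - V_1(\overline{M})(n) \ge A - o(1) = 2\epsilon-o(1) > \epsilon + \delta(n)$, contradicting the $\epsilon$-NE inequality in Definition~\ref{def:machine}.

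The step I expect to require the most care is the one-way reduction: the assumption applies to a TM that inverts a single random instance $f_k(x)$, whereas $M_2$ acts on an entire tuple of correlated puzzles. The reinterpretation of $M_2$ as the single-instance inverter $M^{(\ell)}$ must be done so that (i) the induced input distribution on $y_\ell$ is exactly uniform over the range of $f_{k_\ell}$ and (ii) the runtime of $M^{(\ell)}$, including the internal sampling and evaluation of the other $f_{k_{\ell'}}$'s, remains strictly below $2^{k_\ell/30}=n^\ell$ for every $\ell>d$. The geometric choice $k_\ell=30\ell\lceil\log n\rceil$ is dictated precisely by the need to make this quantitative budget work uniformly across levels and to keep the brute-force deviation at level $d+1$ polynomial; the rest of the argument is a routine application of the zero-sum and quantifier structure of Definition~\ref{def:machine}.
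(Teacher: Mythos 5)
Your proposal follows essentially the same route as the paper's proof: embed a tower of one-way-function puzzles with geometrically increasing security parameters into the common type, show via a reduction to a single-instance inverter that no polynomial-time player can invert beyond a level determined by its running-time exponent, and then deviate by brute-forcing one level higher in polynomial time (the paper counts the number of hits where you take the maximum inverted level, and it treats large $\epsilon$ by rescaling payoffs just as you do with $A=2\epsilon$). The one quantitative point to tighten is that the overhead of sampling and evaluating the other puzzles inside $M^{(\ell)}$ is a fixed polynomial whose degree may exceed $d+1$, so the bound ``runtime of $M^{(\ell)}$ at most $2^{k_\ell/30}$'' should only be claimed for $\ell>\max(d,c_0)$ for a suitable constant $c_0$; this changes nothing downstream, since brute-forcing level $\max(d,c_0)+1$ is still polynomial, and it is exactly the kind of slack the paper absorbs by working with the threshold $60r$ rather than $30r$.
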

\begin{proof}
We first prove the theorem for $\epsilon=0$ and then show how to extend it
to arbitrary $\epsilon>0$.
Let $\mathcal{G}=\{G_1,G_2,\ldots\}$ be the following polynomial game,
which we call the \emph{one-way function game}. For all $n$, we
define $G_n$ as follows. There are two players, 1 and 2.
Fix a one-way function $\{f_k\}_{k \geq 1}$ that is
$2^{k/10}$-secure against a $2^{k/30}$-bounded inverter. The type
space is the same for each player, and consists of tuples of
$l=\lceil \log n \rceil $ bitstrings of the form $(f_{\lceil \log
  n \rceil}(x_1),\ldots,f_{\lceil \log n \rceil^{2}}(x_l))$.
The
distribution on types is
generated by choosing $x_i \in \{0,1\}^{i\lceil\log n\rceil^{}}$ uniformly at random,
and choosing the $x_i$'s independently.  Given his type
$t_n = (f_{ \lceil \log   n\rceil^{} }(x_1),\ldots,f_{\lceil \log
  n\rceil^{2}}(x_l)\}$, player $j$
outputs $y^j_1,\ldots,y^j_l$. A {\em
  hit} for player $j$ is an index $i$ such that $f_{i \lceil \log
  n\rceil^{}}(y^j_i)=f_{i \lceil \log n \rceil^{}}(x_i)$.
Let $a_j$ denote how many hits player $j$ gets.
The payoff of player $j$ is $1$ if
$a_j-a_{3-j}>0$. If $a_j-a_{3-j}=0$, both players receive a payoff of
0. Observe that the utility function of each player is polynomial-time
computable in $n$. Clearly the length of every action of $G_n$ is
polynomial in $n$ and so is the length of the type $t_n$. Hence the
one-way function game is a polynomial game.

We now prove that there cannot be a polynomial-time NE for $\mathcal{G}$.
game. Suppose, by way of contradiction, that there is a NE $(M_1,M_2)$
for $\mathcal{G}$.  Choose $r > 1$ such that the running time of both
$M_2$ and $M_2$ is bounded by $n^r$ for sufficiently large $n$.

We claim that, for all $n$ sufficiently large, for all $i \ge 60r$,
the probability that $M_j$ inverts $f_{i \lceil \log n\rceil^{}}(x_i)$ is
at most $\frac{1}{2^{i\lceil \log n\rceil/10}}$.
To prove this claim, suppose by way of contradiction that if there exists
$i \ge 60r$ such that,
infinitely many $n$'s, $M_j$ inverts $f_{i \lceil \log
  n\rceil^{}}(x_i)$ with
too large a probability, then we get a contradiction to
$\{f_k\}_{k \geq 1}$ being a one-way function. The idea is to use
$M_j$ to construct a TM $M_j'$ that can invert $f_k(x_k)$ for infinitely many
$k$'s with probability that is larger than $\frac{1}{s(k)}.$

First suppose that given $x$
with $|x|=s \lceil\log m \rceil$ where $30r < s \leq \lceil\log m\rceil$, it is the case that
given $1^m$ and $f_{i \lceil\log m\rceil}(x_i)$ for all $1 \leq i \leq \lceil\log m\rceil$, we
have that $M_j$ inverts $f_{s \lceil\log m\rceil}(x_m)$ in time $m^r$ with probability greater than $2^{-s \lceil\log
  m\rceil/10}=2^{-|x|/10}$.
  Then there is a TM $M_j"$ that inverts $f_{s
  \lceil\log m\rceil}(x)$ in time at most $2m^r$ with probability greater than
$2^{-|x|/10}$. $M_j"$ simply generates $f_{i \lceil\log m\rceil}(x_i)$ for all
$i\neq s$ in the range $[1 \ldots \lceil\log m\rceil]$, where $x_i \in
\{0,1\}^{i\lceil\log m\rceil}$ is selected uniformly at random. By the definition
of one-way function, generating all these $x_i$'s can be done in time
$o(m)$. Next $M_j"$ generates $1^m$ and runs $M_j$ on the sequence of
strings $\{f_{\lceil\log m\rceil}(x_1),\ldots,f_{\lceil\log m\rceil^2}(x_{\lceil\log m}\rceil)\}$ and
$1^m$. The total running of $M_j"$ is at most $m^r+o(m^r)< 2m^r$ and
the probability $M_j"$ inverts $f_{s \lceil\log m\rceil}(x)$ is identical to that
of $M_j$.

Consider now an arbitrary $x \in \{0,1\}^*$. Suppose that $|x|$ is of
the form $s\lceil\log m\rceil$ with $s \leq \lceil\log m\rceil$. Furthermore, assume $s$ is
the largest integer such that $|x|=s \lceil\log m\rceil$ with $s \leq \lceil\log m\rceil$
(equivalently $m$ is the smallest integer such that $|x|=s\lceil\log m\rceil$ for
an integer $s \leq \lceil\log m\rceil$). Clearly there are at most $s \leq \lceil\log m\rceil$
distinct ways to write $|x|$ as a product of integers $s'$ and $t$
such that $t=\lceil\log m'\rceil$ for an integer $m'$ and $s' \leq \lceil\log
m'\rceil$. Suppose now there exist two integers $s',m'$ with $s'\geq 30r+1$
such that $|x|=s'\lceil\log m'\rceil$ and that
$M_j(f_{\lceil\log m'\rceil}(x_1),\ldots,f_{\lceil\log m'\rceil^2}(x_{\lceil\log m'\rceil}),1^{m'})$ runs in time $(m')^r$ and inverts $f_{s \lceil\log m'\rceil}(x)$ with probability larger than $\frac{1}{2^{s'\lceil\log m'\rceil/10}}$. Since $s'>30r+1$, and as $\lceil\log m'\rceil \geq \lceil\log m\rceil$, we have that the running time of $M_j$ given the type and $1^{m'}$ is at most $$(m')^r \leq 2^{s' \lceil\log m'\rceil/30-\lceil\log m\rceil/30}=2^{|x|/30}m^{-1/30}.$$ As before, $M_j$ can be converted to a TM $M_j"$ that simulates $M_j$ when given only $f_{s'\lceil\log m'\rceil}(x_s')$ where the running time of $M_j"$ is at most doubled when compared to the running time of $M_j$.

We now explain how to construct a TM $M'$ that inverts infinitely many inputs $z$ using $M$ and show this implies a violation of our assumption that $f_k$ $k= 1, 2, 3, \ldots$ is a one-way function.
Let $b$ be the constant that is presumed to exist by the definition of
one-way function such that $f_k: \{0,1\}^k\rightarrow \{0,1\}^{k^b}$.
Given an input $z$, $M_j'$ checks if $|z| = N^b$ for some $N$
divisible by $i$.  Clearly this
can be done in time polynomial in $|z|$.  If not, $M_j'$ halts and
outputs some fixed constant, say 0.
(In this case $|z| \ne i \lceil \log n \rceil$ for some $n$.)
If $|z| = N^b$ and $N = ik$, note that since $i \ge 60r$, $k \le
|z|^{1/b}/60r$.
$M_j'$ then computes
$$M_j(1^{2^k},(f_{\lceil \log
  n \rceil}(x_1),\ldots,f_{(i-1)\lceil \log n \rceil}(x_{i-1}), z,
f_{(i+1)\lceil \log   n \rceil}(x_{i+1}),\ldots, f_{\lceil \log n \rceil^2}(x_{\lceil\log n}\rceil)),$$
where $x_1, \ldots, x_{i-1}, x_{i+1}, \ldots, x_{\log n}$ are
randomly chosen inputs of the appropriate length.  That is, we give
$M_j$ as input a tuple that includes $z$ in the appropriate position
padded out by randomly-chosen elements of the right form.  Since $M_j$ runs
in time $n^r$
(remember we measure the running time of $M_j$ in terms of the number of $1$'s, that is, $n=2^k$), $M_j$ computes an output given this input in time
$2^{kr}$.  Finally, $M_j'$ checks to see if the
output $y_i$ of $M_j$ inverts $z$ (with respect to $f_{i \lceil \log n
  \rceil}$).  If so, this is what it outputs; otherwise it outputs
some 0.
It is easy to verify that
if $z = f_{i \lceil \log n \rceil}(x)$ for some $x$ and $n$
(in which case $n = 2^k$), then $M_j'$ inverts $z$ with probability
at least $\frac{1}{2^{i\lceil \log n\rceil/10}}$.  Moreover, inverting $z$ can
be done in time $2^{kr} + p(|z|)$, where $p$ is some polynomial in
$z$.  Since $i \ge  60 r$,  $k = \lceil \log n \rceil$, and $|z| = (i
\lceil \log n \rceil)^b$, it can be done in time at most
$2^{i\lceil \log n \rceil/60} + p'(i \lceil \log n \rceil)$, where
$p'$ is some polynomial.  Clearly
$2^{i\lceil \log n \rceil/60} + p'(i \lceil \log n \rceil) \le
2^{i\lceil \log n \rceil/30}$ for all $n$ sufficiently large, so we
can modify $M_j'$ so that it runs in time at most
$2^{i\lceil \log n \rceil/30}$ for all $n$.  This contradicts the
assumption that $f_k$, $k= 1, 2, 3, \ldots$ is a one-way function.

It is easy to check that
$\sum_{i=60r}^{\lceil \log n\rceil^2}\frac{1}{2^{i\lceil \log
    n\rceil/10}} \le 1/n^{3r}$.
By the union bound and our previous claim,
the probability each player will have strictly more than
$60r$ hits is $o(1)$ for $n$ sufficiently large. On the other hand,
the first player can obtain
$60r+1$ hits in time at most $O(n^{60r+2+o(1)})$; for all $i \leq
60r+1$, it can find $y_i$ such that $f_{i \lceil \log
  n\rceil}(y_i)=f_{i \lceil \log n\rceil}(x_i)$ in time $O((i \cdot
\lceil \log n\rceil)^t 2^{i \cdot \lceil \log n\rceil})=O(n^{i+o(1)})$
by exhaustively examining every string in $\{0,1\}^{i \lceil \log
  n\rceil}$, as we assume there exists a TM $M$ that can compute
$f_{i\lceil \log n\rceil}(x_i)$ in time $|x_i|^t$ for some constant
$t$. Therefore, player can switch from $M_1$ to a TM $M_1^*$ that
runs in time $O(n^{60r+2})$
and strictly improve his utility in $G_n$ by an expected amount of
at least $1-n^{-r}$
for all $n$ sufficiently large,
contradicting the definition of Nash equilibrium.

This already shows that there is no $\epsilon$-NE for $\epsilon\le 1$.
An analogous argument shows that given a purported
$\epsilon$-equilibrium $(M_1, M_2)$ with $1<\epsilon \le B$, for all $B>1$, where $M_1$
and $M_2$ run in time at most $n^r$,
we can find a TM $M^*$ for the first player, that runs in time $O(n^{60r+B+2})$ that gives an expected utility
improvement of at least $B+1-n^{-r}$ for player $1$. It follows that for all $\epsilon$, there is no
$\epsilon$-NE.
\end{proof}

Similar ideas can be applied to show there is a 2-player
\emph{extensive-form} polynomial game that has no polynomial
$\epsilon$-NE, where we no longer need to use a type space.
(See \cite{Kuhn} for the definition of extensive-form polynomial game
and polynomial $\epsilon$-NE in extensive-form polynomial games; we
hope that our discussion suffices to give the reader  an intuitive sense.)
In the game $G_n$, instead of the tuple
$(f_{\lceil \log n\rceil }(x^j_1),\ldots,f_{l
  \lceil \log n\rceil}(x^j_l))$ being player $j$'s type, player $j$
chooses $x^j_1, \ldots, x^j_l$ at random and sends this tuple to
player $3-j$.  Again, player $j$ attempts
to invert as many of $f_{\lceil \log n\rceil
}(x^{1-j}_1),\ldots,f_{l \lceil \log n\rceil}(x^{1-j}_l)$ as it can;
they payoffs are just as in the Bayesian game above.
A proof similar to that of Theorem~\ref{thm:one_way} shows that this
game does not have a polynomial NE.

The one-way function game also shows the affect of restricting
strategies to be polynomial-time computable.  Clearly, withtout this
restriction, the game has a trivial NE: all players correctly invert
every element of their tuple.  On the other hand,
consider a modification of the game where in
$G_n$, a player's type consists of a single element $f_n(x_n)$, with
 $x_n$ a bitstring of length $n$ chosen uniformly at
random.
If both players simultaneously invert or fail to
invert $f_n(x_n)$, then both get zero. Otherwise, the player who
correctly inverts gets 1 and the other player gets
$-1$.  Again, it is easy to see that if we take $\mathcal{M}$ to be the family
of all TMs, the only Nash equilibrium is to find $y_n,z_n$ such that
$f_n(y_n)=f_n(z_n)=f(x_n)$. But if $\mathcal{M}$ consists of only
polynomial-time TMs, then it is a polynomial-time NE for both players
to simply output a random string, as neither player can invert $f$ with
 non-negligible probability, and we ignore negligible additive increase
to the utilities of players.

\section{Equilibrium With Respect to Concrete Time Bounds}\label{sec:concrete}

The previous example may lead one to speculate that lack of Nash equilibrium in computational games hinges on asymptotic issues, namely, our ability to consider larger and larger action and type spaces. This raises the question of
what happens if we restrict our attention to games where players are constrained to execute at most $T$ computational steps, where $T>0$ is a fixed integer. It turns out that if the use of randomness is counted as a computational action, then there may not be Nash equilibria, as the following example shows. We assume from now on that $T>2$.

In our computational game, the family of admissible TMs, which we denote by $\mathcal{M_T}$, is the set of all probabilistic TMs whose running time is upper-bounded by $T$. The operation of printing a character takes one computational step, and so does the movement of the cursor to a different location on the tape. The generation of a random bit (or alternatively querying a bit in a designated tape that contains random bits) requires at least one computational step (we allows arbitrary bias of a bit, as it does not effect the proof).

Consider the following 2-player zero-sum normal-form computational game $\mathcal{F}$ between Alice ($A$) and Bob ($B$). For every $n$, $F_n$ is the same game $F$.  The action space of each player is $\{0,1\}^T$. By our choice of $\mathcal{M_T}$, both players are constrained to perform at most $T$ computational steps.
The game proceeds as follows. $A$ and $B$ use TMs $M_A,M_B \in \mathcal{M_T}$ respectively, to compute their strategies. $M_A$ outputs a single bit $a_1$. $M_B$ outputs $b_1 \in \{0,1\}$. Based on $a_1$ and $b_1$, a game of \emph{matching pennies} is played. Namely, if $a_1=b_1,$ $A$ gets 1, otherwise $B$ gets 1. In the second phase of the game, the TM of each player prints as many characters as possible without violating the constraint of performing at most $T$ steps. If the final number of characters is the same for both players, then both get a payoff of $0$ for the second phase. Otherwise the player with a larger number of printed characters gets an additional bonus of 1.

\begin{theorem}\label{thm:pennies}
The computational game $\mathcal{F}$ does not have an $\epsilon$-$\mathcal{M_T}$-NE, for all $\epsilon<1$.
\end{theorem}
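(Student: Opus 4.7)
The plan is to show that for every profile $(M_A, M_B) \in \mathcal{M_T}^2$, the sum of the two players' best available deviation gains is at least $2$, so at least one player has a deviation improving her utility by at least $1 > \epsilon$, contradicting the $\epsilon$-$\mathcal{M_T}$-NE condition.

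Let $p = \Pr[M_A \text{ outputs } 1]$ and $q = \Pr[M_B \text{ outputs } 1]$ (over the TMs' internal coin tosses), and let $X_A, X_B$ denote the random number of characters printed by $M_A$ and $M_B$ respectively. I would introduce two purely deterministic strategies $D_0, D_1 \in \mathcal{M_T}$: $D_b$ writes the bit $b$ in the first phase and then prints $T-1$ additional fixed characters in the second phase, so $D_b$ uses no random bits and prints exactly $T$ characters. A direct calculation---using that $D_b$ yields expected matching-pennies payoff $\pm(1-2q)$ and that the second-phase bonus earned by $A$ against $M_B$ equals $1$ whenever $X_B < T$ and $0$ when $X_B = T$---gives
\[
\max_{b \in \{0,1\}} V_A(D_b, M_B) \;\geq\; |1 - 2q| + \Pr[X_B < T],
\]
and symmetrically $\max_b V_B(M_A, D_b) \geq |1 - 2p| + \Pr[X_A < T]$.

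The crux is the pointwise bound $|1-2p| + \Pr[X_A < T] \geq 1$ (and symmetrically for $q, X_B$). If $p \in \{0,1\}$ then $|1-2p| = 1$; otherwise $M_A$'s first output bit has a non-degenerate distribution, so $M_A$ must invoke the random-bit generator at least once, which by the cost model burns a computational step and forces $X_A \leq T - 1$ almost surely, giving $\Pr[X_A < T] = 1$. Summing the two deviation bounds and using the zero-sum identity $V_A(M_A, M_B) + V_B(M_A, M_B) = 0$, the total of the two players' gains is at least $|1-2p| + |1-2q| + \Pr[X_A < T] + \Pr[X_B < T] \geq 2$, so one of the gains is at least $1$, ruling out $\epsilon$-$\mathcal{M_T}$-NE for every $\epsilon < 1$.

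The main delicacy I expect is the cost-model bookkeeping: justifying (a) that $D_0, D_1$ are genuinely admissible members of $\mathcal{M_T}$, i.e., a deterministic TM can print $T$ characters within $T$ computational steps, and (b) that any TM whose first output bit has non-degenerate distribution must consume at least one computational step on random-bit generation and therefore print strictly fewer than $T$ characters. Both follow from the rules laid out just before the theorem, but they need to be invoked carefully to make the inequality $|1-2p| + \Pr[X_A < T] \geq 1$ watertight.
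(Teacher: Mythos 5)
Your proof is correct, and it exploits exactly the same tension as the paper's: a player must either randomize her matching-pennies bit (which, under the stated cost model, burns a step and makes her lose the printing race to a hardwired deterministic deviator that prints for $T-1$ steps) or play deterministically (and then be exploited for the full matching-pennies stake). The deviating machines you use ($D_0,D_1$: hardwired bit plus maximal printing) are the same ones the paper uses. Where you differ is in the organization: the paper runs an explicit case analysis (``some player randomizes'' vs.\ ``neither does'', with a sub-case on how much the opponent prints), and then merely asserts at the end that the argument yields non-existence of $\epsilon$-NE for all $\epsilon\in(0,1)$. You instead prove the single pointwise inequality $|1-2p|+\Pr[X_A<T]\ge 1$ and sum the two players' regrets against the zero-sum identity, which gives total regret at least $2$ and hence a deviation gain of at least $1$ for some player in every game $F_n$. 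This buys you the quantitative $\epsilon<1$ conclusion directly and uniformly in $n$ (so the negligible slack $\delta(n)$ in Definition~\ref{def:machine} is absorbed for all large $n$), and it avoids the paper's somewhat awkward preliminary reduction to machines that play the same strategy in every $F_n$. The only points to state carefully are the ones you already flag: the cost-model accounting for why $D_b$ is admissible and prints maximally, and why a non-degenerate first bit forces $X_A<T$ almost surely; both follow from the conventions fixed just before the theorem, exactly as in the paper's argument.
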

\begin{proof}

Assume, by way of contradiction, that $(M_A,M_B)$ is a Nash equilibrium for $\mathcal{F}$. Since TMs in $\mathcal{M_T}$ are constrained to query at most $T$ bits, it follows that the strategy computed by $M_A$ (or $M_B$) given $1^n$, will be the same for all $n>T$. As the outcomes of the games $F_m$, $m\leq T$, do not effect, by our definition of NE in computational games, whether $(M_A,M_B)$ is an equilibrium, we can assume w.l.o.g that both $M_A$ and $M_B$ compute the same strategy (whether mixed or pure) in all games $F_n.n \geq 1$.

Suppose that one of the players uses randomization. Assume this is player $A$. Namely, $M_A$ generates a random bit before outputting $a_1$.
Then $A$ can guarantee a payoff for the first phase (the matching pennies game) that is no smaller than his current payoff by choosing a TM $M'_A$ that outputs a \emph{deterministic} best response $a_1$ against against the strategy of $B$ in the matching penny game. Observe that we can assume that $a_1$ is ``hardwired" to $M'_A$. In particular outputting $a_1$ can be done in a single computational step. Then $A$ can print strictly more $1$'s in the second phase of the game by configuring $M'_A$ to print $T-1$ $1$'s (which can be done in $T-1$ steps). If $B$ prints $T-1$ in the second phase of the game, we have that $A$ can increase its payoff in $F_n$ for all $n$ by switching to $M'_A$. If, on the other hand, $M_B$ prints less than $T-1$ characters in the second step, an analogous argument shows that $B$ can strictly increase its payoff in $F_n$ for all $n$, by using a TM that runs in at most $T$ steps. In any event, we get a contradiction to the assumption that $(M_A,M_B)$ is a NE.

Suppose now that $A$ does not use randomization. In this case, it follows immediately by the definition of matching-pennies that either $A$ or $B$ can strictly improve their payoff in the first phase of $F_n$ for all $n$, by outputting the (deterministic) best response to their opponent and printing $T-1$ characters afterwards. As before, we can assume this response is hardwired to the appropriate TM, such that outputting it consumes one computational step, allowing players to print $T-1$ characters in the second phase of the game.

Finally, it is not difficult to verify that the argument above establishes that $\mathcal{F}$ does not have an $\epsilon$-NE for $\epsilon$-NE for all $\epsilon \in (0,1)$.
\end{proof}

One might wonder whether the non-existence of NE in computational games follows from the fact that we are dealing with an infinite sequence of games with infinitely many possible TMs (e.g., $|\mathcal{M}|=\infty$). Nash Theorem regarding the existence of NE requires that the action space of every player is finite; without this requirement a NE may fail to exist. Hence it is natural to ask whether limiting $|\mathcal{M}|$ to be finite (for example, taking $\mathcal{M}$ to be the family of all TMs over a fixed alphabet with at most $S$ states for some bound $S$) may force the existence of NE in computational games.
Theorem~\ref{thm:pennies} illustrates that this is not the case:
$\mathcal{F}$ will not have a NE even if we take $\mathcal{M}$ to consist only of TMs whose number of states is upper bounded by a large enough positive number $S$ ($S$ should allow for using the TM that is hardwired to output the appropriate best response in the matching pennies game and print $T-1$ characters in the second phase).  The reason why NE does not exist despite the finiteness of $\mathcal{M}$, is that in contrast to ordinary games, where a mixed actions of best responses is a best response, in our setting this is not necessarily true: mixing over actions may consume computational resources, forcing players to choose actions that are suboptimal when using randomized strategies.

\section{Conclusion}

We have examined sequences of games where TMs compute strategies of players. We demonstrated that NE for polynomial time players may not exist. The implications are twofold: first it hints that in competitive situation between agents that can use more running times or faster algorithms to improve their outcomes, there may evolve a
``computational arms race" which does not admit NE. Second, it entails that classic notions in game theory such as best-response should be treated carefully when considering sequence of games with time-bounded players. Specifically, these notions may fail to exist even if we ignore negligible effects on the utility of players (as in the definition of polynomial time NE).

There are several factors that may imply the existence of equilibrium
when considering a sequence of games. For example, when the number of
states of the TM used by players is bounded there may exist an
equilibrium. Studying properties of games or TMs used by players that result with equilibrium in sequences of games seem as an interesting direction for future research. It might also prove worthwhile to study the effect of limiting resources other than time such as space or the amount of randomness used by players.

\end{document}